\newcommand{\expr}[1]{#1{}}
\newcommand{\var}[2]{\mathit{#1}#2}
\newcommand{\abs}[3]{\lambda #1{.#2{#3}}}
\newcommand{\app}[3]{#1{~#2{#3}}}
\newcommand{\args}[3]{#1{\ldots#2{#3}}}
\newcommand{\casedots}[6]{\mathsf{CASE}~#1{~\mathsf{of}~#2{\rightarrow #3{~| 
\cdots|~#4{\rightarrow #5{#6}}}}}}
\newcommand{\Where}[6]{#1{} ~ \mathsf{WHERE} ~ #2 = #3 \ldots #4 = #5{#6}}
\newcommand{\ignore}[1]{}
\newenvironment{proof}{{\em Proof}.~}{\hfill $\Box$\\}
\newtheorem{theorem}{Theorem}[section]
\newtheorem{lemma}[theorem]{Lemma}
\newtheorem{definition}[theorem]{Definition}  
\newtheorem{example}{Example}
\title{Generating Loop Invariants for Program Verification by Transformation}
\author{G.W. Hamilton
\institute{School of Computing\\
Dublin City University\\
Ireland}
\email{hamilton@computing.dcu.ie}}
\begin{document}

\maketitle

\begin{abstract}
Loop invariants play a central role in the verification of imperative programs. However, finding these invariants is often a difficult and time-consuming
task for the programmer. We have previously shown how program transformation can be used to facilitate the verification of functional programs, 
but the verification of imperative programs is more challenging due to the need to discover these loop invariants.
In this paper, we describe a technique for automatically discovering loop invariants. Our approach is similar to the 
induction-iteration method, but avoids the potentially exponential blow-up in clauses that can result when using this and other methods. 
Our approach makes use of the {\em distillation} program transformation algorithm to transform clauses into a simplified form that facilitates the identification
of similarities and differences between them and thus help discover invariants. We prove that our technique terminates, and demonstrate its successful
application to example programs that have proven to be problematic using other approaches. We also characterise the situations where our technique
fails to find an invariant, and show how this can be ameliorated to a certain extent.
\end{abstract}

\section{Introduction}

The verification of imperative programs generally involves annotating programs with {\em assertions}, and then using a
theorem prover to check these annotations. Central to this annotation process is the use of {\em loop invariants} which are
assertions that are true before and after each iteration of a loop. However, finding these invariants is a difficult and 
time-consuming task for the programmer, and they are often reluctant to do this. In previous work \cite{HAMILTON07B}, 
we have shown how to make use of program transformation in the verification of functional programs.
However, the verification of imperative programs is not so straightforward due to the need to discover these invariants prior to verification.
In this paper, we describe a technique for automatically discovering loop invariants, thus relieving the programmer of this burden. 
Our technique relies upon the programmer having provided a {\em postcondition} for the program; this is much less onerous than providing 
loop invariants as it generally forms part of the specification of the program.

The technique we describe is similar to the induction-iteration method of Suzuki and Ishihata \cite{SUZUKI77}, but we
overcome the problems associated with that method which were potential non-termination and exponential blow-up in
the size of clauses. Similarly to the induction-iteration method, our technique involves working backward through the
iterations of a loop and determining the assertions that are true before each iteration.  
We use the {\em distillation} program transformation \cite{HAMILTON07A,HAMILTON12} 
to transform assertions into a simplified form that facilitates the identification of similarities 
and differences between them. Commonalities between these assertions are identified, and they are generalised 
accordingly to give a putative loop invariant that can then be verified. We prove that our technique terminates
and demonstrate its successful application to example programs that have proven to be problematic using other approaches.
We also characterise the situations where our technique fails to find an invariant.

The remainder of this paper is structured as follows. In Section 2, we describe the simple imperative language that will be
used throughout the paper. In Section 3, we provide some background on the use of loop invariants in the verification of programs 
written in this language. In Section 4, we give a brief overview of the distillation program transformation algorithm which is used
to simplify assertions in our approach. In Section 5, we describe our technique for the automatic generation of loop invariants
and prove that it terminates. In Section 6, we give a number of examples of the application of our technique, demonstrating
where it succeeds on problematic examples and where it does not. In Section 7, we consider related work and compare these to our 
own our technique. Section 8 concludes and considers future work.

\section{Language}

In this section, we introduce our object language, which is a simple imperative programming language.
\begin{definition}[Language Syntax]
\normalfont{The syntax of our object language is as shown in Figure \ref{syntax}.
\begin{center}
\begin{figure}[htb]
\begin{center}
\begin{tabular}{lrll}
$S$ & $::=$ & $\mathsf{SKIP}$ & Do nothing \\
& $|$ & $V$ $:=$ $E$ & Assignment\\
& $|$ & $S_1$ $;$ $S_2$ & Sequence \\
& $|$ & $\mathsf{IF}$ $B$ $\mathsf{THEN}$ $S_1$ $\mathsf{ELSE}$ $S_2$ & Conditional \\
& $|$ & $\mathsf{BEGIN}$ $\mathsf{VAR}$ $V_1 \ldots V_n$ $S$ $\mathsf{END}$ & Local block \\
& $|$ & $\mathsf{WHILE}$ $B$ $\mathsf{DO}$ $S$ & While loop 
\end{tabular}
\end{center}
\caption{Language Syntax}
\label{syntax}
\end{figure}
\end{center}
$E$ corresponds to natural number expressions which belong to the following datatype:
$$Nat ::= \mathsf{Zero}~|~\mathsf{Succ}~Nat$$
We use the shorthand notation $0,1,\ldots$ for $\mathsf{Zero}, \mathsf{Succ~Zero}, \ldots$ \\
\\
$B$ corresponds to boolean expressions which belong to the following datatype:
$$Bool ::= \mathsf{True}~|~\mathsf{False}$$}
\end{definition}
These expressions are defined in a simple functional language with the following syntax.
\begin{definition}[Expression Syntax]
\normalfont{The syntax of expressions in our language is as shown in Figure \ref{grammar}.}
\end{definition}
\begin{figure}[htb]
\begin{center}
\begin{tabular}{lrll}
$\expr{\var{E}}$ & ::= & $\expr{\var{V}}$ & Variable \\
& $|$ & $\expr{\app{\var{C}}{\args{\var{E_1}}{\var{E_k}}}}$ & Constructor Application \\
& $|$ & $\expr{\abs{\var{V}}{\var{E}}}$ & $\lambda$-Abstraction \\
& $|$ & $\expr{\var{F}}$ & Function Call \\
& $|$ & $\expr{\app{\var{E_0}}{\var{E_1}}}$ & Application \\
& $|$ & $\expr{\casedots{\var{E_0}}{\var{P_1}}{\var{E_1}}{\var{P_k}}{\var{E_k}}}$ & Case Expression \\ 
& $|$ & $\expr{\Where{\var{E_0}}{\var{F_1}}{\var{E_1}}{\var{F_n}}{\var{E_n}}}$ & Local Function Definitions \\
\\
$\expr{\var{P}}$ & ::= & $\expr{\app{\var{C}}{\args{\var{V_1}}{\var{V_k}}}}$ & Pattern
\end{tabular}
\end{center}
\caption{Expression Syntax}
\label{grammar}
\end{figure} 
An expression can be a variable, constructor application, $\lambda$-abstraction, function call, application, $\mathsf{CASE}$ or $\mathsf{WHERE}$. 
Variables introduced by $\lambda$-abstractions, $\mathsf{CASE}$ patterns and $\mathsf{WHERE}$ definitions are {\em bound}; all other variables 
are {\em free}. We use $fv(E)$ to denote the free variables of $E$ and write $E \equiv E'$ if $E$ and $E'$ differ only in the names of bound variables.

The constructors are those specified above $(\mathsf{Zero},\mathsf{Succ},\mathsf{True},\mathsf{False})$.
We assume a number of pre-defined operators written in this language; the explicit definitions of these operators are unfolded as part of our transformation
rather than appealing to properties such as associativity, etc. For natural number expressions the operators $(+,-,*,/,\%,^\wedge)$ implement natural 
number addition, subtraction, multiplication, division, modulus and exponentiation respectively.
For boolean expressions the operators $(\wedge,\vee,\neg,\Rightarrow)$ implement conjunction, disjunction, negation and implication respectively. 
The relational operators $(<,>,\leq,\geq,=,\neq)$ are also defined.
\begin{definition}[Substitution]
\normalfont{ $\theta = \{V_1 \mapsto E_1, \ldots, V_n \mapsto E_n\}$ denotes a {\em substitution}.
If $E$ is an expression, then $E\theta = E\{V_1 \mapsto E_1, \ldots, V_n \mapsto E_n\}$ is the result of simultaneously 
substituting the expressions $E_1,\ldots, E_n$ for the corresponding variables $V_1,\ldots,V_n$, respectively, 
in the expression $E$ while ensuring that bound variables are renamed appropriately to avoid name capture.}
\end{definition}
\ignore{
\begin{definition}[Instance]
\normalfont{We say that an expression $e_1$ is an instance of another expression $e_2$ if there is a substitution $\theta s.t. e_1 \equiv e_2 \theta$.}
\end{definition}
}
We reason about the behaviour of our imperative programming language using {\em Floyd-Hoare style logic} \cite{FLOYD67,HOARE69}.
Specifications in this logic take the form of a triple $\{P\}~S~\{Q\}$, where $P$ and $Q$ are boolean expressions that denote the pre- and post-conditions 
respectively for imperative program $S$ i.e. if $P$ is true, then after execution of $S$, $Q$ will be true. These are therefore {\em partial correctness}
specifications, and do not say anything about the termination of programs. 
\begin{definition}[Floyd-Hoare Logic]
\normalfont{The rules and axioms of Floyd-Hoare logic for our imperative language are as shown in Figure \ref{proofrules}.
\begin{center}
\begin{figure}[htb]
\begin{center}
\begin{tabular}{cc}
$\{P\} ~ \mathsf{SKIP} ~ \{P\}$ & $\{Q\{V:=E\}\} ~ V :=E ~ \{Q\}$ \\
\\
$\infer{\{P\} ~ S_1;S_2 ~ \{R\}}{\{P\} ~ S_1 ~ \{Q\}, ~~~~~~ \{Q\} ~ S_2 ~ \{R\}}$ & $\infer{\{P\} ~ \mathsf{IF} ~ B ~ \mathsf{THEN} ~ S_1 ~ \mathsf{ELSE} ~ S_2 ~ \{Q\}}{\{P \wedge B\} ~ S_1 ~ \{Q\}, ~~~~~~ \{P \wedge \neg B\} ~ S_2 ~ \{Q\}}$ \\
\\
$\infer{\{P\} ~ \mathsf{BEGIN} ~ \mathsf{VAR} ~ V_1 \ldots V_n ~ S ~ \mathsf{END} ~ \{Q\}}{\{P\} ~ S ~ \{Q\},~~~~~~ V_1 \ldots V_n \notin fv(P), fv(Q)}$ & $\infer{\{I\} ~ \mathsf{WHILE} ~ B ~ \mathsf{DO} ~ S ~ \{I \wedge \neg B\}}{\{I \wedge B\} ~ S ~ \{I\}}$ \\ 
\\
$\infer{\{P\} ~ S ~ \{Q\}}{P \Rightarrow P', ~~~~~~ \{P'\} ~ S ~ \{Q\}}$ & $\infer{\{P\} ~ S ~ \{Q\}}{\{P\} ~ S ~ \{Q'\}, ~~~~~~ Q' \Rightarrow Q}$
\end{tabular}
\end{center}
\caption{Floyd-Hoare Logic}
\label{proofrules}
\end{figure}
\end{center}
In the rule for the $\mathsf{WHILE}$ loop, the assertion $I$ is called the {\em loop invariant}.}
\end{definition}

\section{Loop Invariants}

A loop invariant is an assertion that is true before and after each iteration of the loop, and usually needs to be provided by the programmer. 
A loop which is annotated in this way is denoted by $\mathsf{WHILE} ~ B ~ \mathsf{DO} ~ \{I\} ~ S$
\begin{definition}[Requirements of Loop Invariants]
\normalfont{The three requirements of the invariant $I$ of the loop $\{P\}~\mathsf{WHILE} ~ B ~ \mathsf{DO} ~ \{I\} ~ S~\{Q\}$ are as follows:
\begin{enumerate}
\item $P \Rightarrow I$
\item $\{I \wedge B\}~S~\{I\}$
\item $(I \wedge \neg B) \Rightarrow Q$
\end{enumerate}
Thus, the precondition $P$ should establish the invariant before executing the loop, the loop body $S$ should maintain the invariant, 
and the invariant should be sufficient to establish the postcondition $Q$ after exiting the loop.}
\label{invariant}
\end{definition}
\begin{example}
\normalfont{Consider the program shown in Figure \ref{example1}.
\begin{center}
\begin{figure}[h]
\begin{center}
\begin{tabular}{l}
$~~~\{\mathsf{n \geq 0}\}$ \\
$~~~\mathsf{x:=0;}$ \\
$~~~\mathsf{y:=1;}$ \\
$~~~\mathsf{WHILE ~ x < n ~ DO}$ \\
$~~~~~~\mathsf{BEGIN}$ \\
$~~~~~~~~~ \mathsf{x:=x+1;}$ \\
$~~~~~~~~~ \mathsf{y:=y * k}$ \\
$~~~~~~ \mathsf{END}$ \\
$~~~\{\mathsf{y = k^\wedge n}\}$
\end{tabular}
\end{center}
\caption{Example Program}
\label{example1}
\end{figure}
\end{center}
This program calculates the exponentiation $\mathsf{k^\wedge n}$.
Say we wish to construct an invariant for the loop in this program which will allow it to be verified.
In \cite{FURIA10} it is observed that the required invariant is often a weakening of the postcondition for the loop
and can be obtained by mutating this postcondition. The assertion $\mathsf{y = k^\wedge x}$ is an invariant for this loop which
is a mutation of the postcondition. However, this invariant is not sufficient to allow verification of this program;
the additional invariant $\mathsf{x \leq n}$ is also required. In general, the problem of constructing appropriate invariants
which are sufficient to allow programs to be verified is undecidable. However, in this paper we show how we 
can automatically generate invariants which are sufficient to allow a wide range of programs to be verified.}
\end{example}
Our approach makes use of the {\em weakest liberal precondition} originally proposed by Dijkstra \cite{DIJKSTRA75}. 
\begin{definition}[Weakest Liberal Precondition]
\normalfont{We define the {\em weakest liberal precondition} for programs in our language, denoted as $WLP(S,Q)$, where $S$ is a program and $Q$ a 
postcondition. The condition $P = WLP(S,Q)$ if $Q$ is true after execution of $S$, and no condition weaker than $P$ satisfies this. 
The key difference of a weakest liberal precondition as opposed to a weakest precondition is that it does not say anything about the termination of programs. 
The rules for calculating $WLP(S,Q)$ for our programming language are as shown in Figure \ref{wlp}.
\begin{center}
\begin{figure}[htb]
\begin{center}
\begin{tabular}{rcl}
$WLP(\mathsf{SKIP},Q)$ & = & $Q$ \\
$WLP(V:=E,Q)$ & = & $Q\{V:=E\}$ \\
$WLP(S_1;S_2,Q)$ & = & $WLP(S_1,WLP(S_2,Q))$ \\
$WLP(\mathsf{IF}~B~\mathsf{THEN}~S_1~\mathsf{ELSE}~S_2,Q)$ & = & $(B \Rightarrow WLP(S_1,Q)) \wedge (\neg B \Rightarrow WLP(S_2,Q))$ \\
$WLP(\mathsf{BEGIN}$ $\mathsf{VAR}$ $V_1 \ldots V_n$ $S$ $\mathsf{END},Q)$ & = & $WLP(S,Q)$, where $V_1 \ldots V_n \notin fv(Q)$ \\
$WLP(\mathsf{WHILE}~B~\mathsf{DO}~\{I\}~S,Q)$ & = & $I \wedge ((B \wedge I) \Rightarrow WLP(S,I)) \wedge ((\neg B \wedge I) \Rightarrow Q)$ 
\end{tabular}
\end{center}
\caption{Weakest Liberal Precondition}
\label{wlp}
\end{figure}
\end{center}}
\end{definition}
Note that the weakest liberal precondition calculation for a loop requires that it has already been annotated with its invariant.
This implies that we should apply our techniques to inner loops first to determine their invariant before applying them to outer loops.

\section{Distillation}

The predicates produced in our approach are simplified using the {\em distillation} transformation \cite{HAMILTON07A,HAMILTON12}. 
Distillation is a fold/unfold program transformation that builds on top of positive supercompilation \cite{SORENSEN96}, but is more powerful, 
thus allowing more simplifications to be performed. The main distinguishing characteristic between the two algorithms is that in distillation, 
generalisation and folding are performed with respect to recursive terms, while in positive supercompilation they are not. In the work described 
here, we use distillation to transform predicates into a simplified form that facilitates the identification of similarities and differences between them. 
In particular, pre-defined associative operators (such as +,*,$\wedge$,$\vee$), whose explicit definitions are unfolded as part of our transformation, 
are always transformed into {\em right-associative} form (for example, $(x+y)+z$ is transformed to $x+(y+z)$).

\subsection{Embedding}

Generalisation is performed if the predicate obtained from distillation is an {\em embedding} of a previously distilled one. 
The form of embedding which we use to inform this process is known as {\em homeomorphic embedding}. 
The homeomorphic embedding relation was derived from results by Higman \cite{HIGMAN52} and Kruskal \cite{KRUSKAL60} 
and was defined within term rewriting systems \cite{DERSHOWITZ90} for detecting the possible divergence of the term rewriting process. 
Variants of this relation have been used to ensure termination within positive supercompilation \cite{SORENSEN94B}, 
distillation \cite{HAMILTON07A,HAMILTON12}, partial evaluation \cite{MARLET94} and partial deduction \cite{BOL93,LEUSCHEL98}.
\begin{definition}[Expression Embedding]
\normalfont{An expression $E$ is {\em embedded} in expression $E'$ if $E \trianglelefteq E'$, where the binary relation $\trianglelefteq$ is defined as follows.}
\end{definition}
\begin{center}
\begin{tabular}[t]{c@{\hspace*{1cm}}c@{\hspace*{1cm}}c}
$\infer{V \trianglelefteq V'}{}$ & $\infer{E \trianglelefteq \phi(E_1, \ldots, E_n)}{\exists i \in \{1 \ldots n\}.E \trianglelefteq E_i}$  & $\infer{\phi(E_1, \ldots, E_n) \trianglelefteq \phi(E_1', \ldots, E_n')}{\forall i \in \{1 \ldots n\}.E_i \trianglelefteq E_i'}$ 
\end{tabular}
\end{center}
The first rule here is for variables, the second is a {\em diving} rule and the third is a {\em coupling} rule.
Diving detects a sub-expression embedded in a larger expression, and coupling matches all the sub-expressions
of two expressions which have the same top-level functor. Bound variables are handled by this relation by requiring 
that they have the same de Bruijn indices. We write $E \preceq E'$ if expression $E$ is coupled with expression $E'$ at the top level.

\subsection{Generalisation}

\begin{definition}[Generalisation of Expressions]
\normalfont{The {\em generalisation} of expressions $E$ and $E'$ (denoted by $E~\sqcap~E'$) is defined as shown below.
\begin{center}
\begin{tabular}{l}
$E \sqcap E' = \left\{\begin{array}{ll}
(\phi(E_1'', \ldots, E_n''),\bigcup_{i=1}^{n} \theta_i,\bigcup_{i=1}^{n} \theta_i'), & $if $\phi = \phi' \\
\hspace*{0.5cm} $where$ \\
\hspace*{0.5cm} E = \phi(E_1, \ldots, E_n) \\
\hspace*{0.5cm} E' = \phi'(E_1', \ldots, E_n') \\
\hspace*{0.5cm} \forall i \in \{1 \ldots n\}.E_i \sqcap E_i' = (E_i'',\theta_i,\theta_i') \\
(V,\{V \mapsto E\},\{V \mapsto E'\}), & $otherwise $(V$ is fresh$)
\end{array}\right.$
\end{tabular}
\end{center}
The result of this generalisation is a triple $(E'',\theta,\theta')$ where $E''$ is the generalised expression and 
$\theta$ and $\theta'$ are substitutions s.t. $E'' \theta \equiv E$ and $E'' \theta' \equiv E'$.
Within these rules, if both expressions have the same functor at the outermost level, this is made the outermost functor 
of the resulting generalised expression, and the corresponding sub-expressions within the functor applications are then generalised. 
Otherwise, both expressions are replaced by the same variable.}
\end{definition}
\begin{definition}[Most Specific Generalisation]
\normalfont{A {\em most specific generalisation} of expressions $E$ and $E'$ is an expression $E''$ such
that for every other generalisation $E'''$ of $E$ and $E'$, there is a substitution $\theta$ such that $E''\theta \equiv E'''$.
The most specific generalisation, denoted by $E \triangle E'$, of expressions $E$ and $E'$ is computed by exhaustively 
applying the following rewrite rule to the triple obtained from the generalisation $E \sqcap E'$:
\begin{center}
$\left(\begin{array}{c}E, \\
\{V_1 \mapsto E', V_2 \mapsto E'\} \cup \theta, \\
\{V_1 \mapsto E'', V_2 \mapsto E''\} \cup \theta'
\end{array}\right)$
$\Rightarrow$
$\left(\begin{array}{c}E\{V_1 \mapsto V_2\}, \\
\{V_2 \mapsto E'\} \cup \theta, \\
\{V_2 \mapsto E''\} \cup \theta'
\end{array}\right)$
\end{center}
This minimises the substitutions by identifying common substitutions which were previously given different names.} 
\end{definition}

\section{Automatic Generation of Loop Invariants}

\subsection{Algorithm}

In order to calculate loop invariants, starting from the postcondition, we work our way backwards through each iteration of the loop
generating successive approximations to the loop invariant. If the current approximation is an embedding 
of a previous one (coupled at the top level) then these approximations are generalised with respect to each other. 
This process is continued until the current approximation is a renaming of a previous one; this is then the putative 
invariant for the loop. If there are a number of different possible paths through the loop, then a number of possible preconditions 
will be calculated for it; these are collapsed into a single precondition by being generalised with respect to each other, thus producing
a single approximation for each loop iteration.

Our algorithm for the automatic generation of an invariant for the loop 
$\mathsf{WHILE}~B~\mathsf{DO}~S$ with postcondition $Q$ is as shown in Figure \ref{algorithm}. 
\begin{figure}[htb]
\begin{center}
\begin{tabbing}
\hspace*{1cm} \=$f~(distill(\neg B \wedge Q))~\emptyset$ \\
\>{\bf where} \\
\>$f~P~\phi$ = \={\bf if}~$\exists Q \in \phi$ s.t. $Q \equiv P$ (modulo variable renaming) \\
\>\>{\bf then}~{\bf return}~$P$ \\
\>\>{\bf else}~\={\bf if}~$\exists Q \in \phi$ s.t. $Q \preceq P$ \\
\>\>\>{\bf then}~$f~P'~\phi$ where $P' = P \triangle Q$ \\
\>\>\>{\bf else}~\={\bf return}~$f~(\displaystyle \bigtriangleup_{i=1}^n \{distill(B \wedge P_i))~(\phi \cup \{P\})$ \\
\>\>\>\>{\bf where}~$WLP(S,P) = \displaystyle \bigwedge_{i=1}^n P_i$
\end{tabbing}
\end{center}
\caption{Algorithm for Finding Loop Invariants}
\label{algorithm}
\end{figure}
Here, $P$ is the current predicate (initially equivalent to $\neg B \wedge Q$, which is true if the loop is exited) and $\phi$ is the set of previous 
approximations to the invariant (initially empty). If $P$ is a renaming of a predicate in $\phi$, then $P$ is returned as the putative invariant. 
If there is a predicate $Q$ in $\phi$ which is embedded in $P$ (coupled at the top level), then $P$ and $Q$ are generalised with respect to 
each other, and the algorithm is further applied to this generalisation. Otherwise, the predicate which is true before the previous execution of 
the loop body is calculated. The conjuncts of this predicate are then combined with the loop condition (which must have been true for the loop 
body to be executed), simplified using distillation and generalised together. The algorithm is then further applied to the resulting generalised 
predicate with $P$ added to $\phi$.

The generated invariant may contain generalisation variables; inductive definitions for these variables can be determined 
using the three requirements for loop invariants (Definition \ref{invariant}). We try to find values for these variables that satisfy each of these 
requirements using our Poit\'{i}n theorem prover \cite{HAMILTON06}\footnote{This could also be done using a SAT solver.}. If we are not able to
satisfy all three of these requirements, then we have failed in finding a suitable invariant.

For the loop $\{P\}~\mathsf{WHILE} ~ B ~ \mathsf{DO} ~ \{I\} ~ S~\{Q\}$, the initial value of generalisation variable $v$ can be obtained 
by satisfying the following for $v_0$ using the first requirement:
\begin{center}
$P \Rightarrow I\{v:=v_0\}$
\end{center}
The inductive definition of $v$ can be obtained by satisfying the following for $v_{i+1}$ using the second requirement:
\begin{center}
$\{I\{v:=v_i\} \wedge B\}~S~\{I\{v:=v_{i+1}\}\}$
\end{center}
The final value of $v$ can be obtained by satisfying the following for $v_n$ using the third requirement:
\begin{center}
$(I\{v:=v_n\} \wedge \neg B) \Rightarrow Q$
\end{center}

\subsection{Example}

\setcounter{example}{0}
\begin{example}
\normalfont{We illustrate this algorithm by applying it to the example program in Figure \ref{example1}. \\
\\
Firstly, we calculate the logical assertion which is true if the loop is exited:
\begin{center}
$\mathsf{\neg(x < n) \wedge y = k^\wedge n}$
\end{center}
This is simplified by distillation to the following\footnote{For this and all following examples, the result of distillation is simplified by replacing any instances 
of the definitions of the pre-defined operators of our language with a corresponding call of the operator; the results would be too unwieldy otherwise.}:
\begin{equation}
\mathsf{x \geq n \wedge y = k^\wedge n}
\label{eq1}
\end{equation}
Then, we calculate the logical assertion which is true before the final execution of the loop body:
\begin{center}
$\mathsf{WLP(BEGIN~ x:=x+1; y:=y * k~END,x \geq n \wedge y = k^\wedge n)}$
\end{center}
This gives the following:
\begin{center}
$\mathsf{x+1 \geq n \wedge y * k = k^\wedge n}$
\end{center}
In conjunction with the loop condition ($\mathsf{x<n}$), this is simplified to the following by distillation:
\begin{equation}
\mathsf{x+1=n \wedge y * k = k^\wedge n}
\label{eq2}
\end{equation}
This is not an embedding of (\ref{eq1}), so the calculation continues.
We next calculate the logical assertion which is true before the penultimate execution of the loop body: 
\begin{center}
$\mathsf{WLP(BEGIN~ x:=x+1; y:=y * k~END,x+1 = n \wedge y * k = k^\wedge n)}$
\end{center}
This gives the following:
\begin{center}
$\mathsf{(x+1)+1=n \wedge (y * k) * k = k^\wedge n}$
\end{center}
In conjunction with the loop condition ($\mathsf{x<n}$), this is simplified to the following by distillation:
\begin{equation}
\mathsf{x+(1+1)=n \wedge y * (k * k) = k^\wedge n}
\label{eq3}
\end{equation}
We can see that (\ref{eq3}) is an embedding of (\ref{eq2}), so generalisation is performed to produce the following:
\begin{equation}
\mathsf{x+v = n \wedge y * w = k^\wedge n}
\label{eq4}
\end{equation}
This is not an embedding, so the logical assertion which is true before execution of the loop body is now re-calculated as follows:
\begin{center}
$\mathsf{WLP(BEGIN~ x:=x+1; y:=y * k~END,x+v = n \wedge y * w = k^\wedge n)}$
\end{center}
This gives the following:
\begin{center}
$\mathsf{(x+1)+v=n \wedge (y * k) * w = k^\wedge n}$
\end{center}
In conjunction with the loop condition ($\mathsf{x<n}$), this is simplified to the following by distillation:
\begin{equation}
\mathsf{x+(1+v)=n \wedge y * (k * w) = k^\wedge n}
\label{eq5}
\end{equation}
We can see that (\ref{eq5}) is an embedding of (\ref{eq4}), so generalisation is performed to produce the following:
\begin{equation}
\mathsf{x+v' = n \wedge y * w' = k^\wedge n}
\label{eq6}
\end{equation}
We can now see that (\ref{eq6}) is a renaming of (\ref{eq4}), so (\ref{eq6}) is our putative invariant
We now try to find inductive definitions for the generalisation variables $v'$ and $w'$ from the three requirements of loop invariants given in 
Definition \ref{invariant}, which we do using our theorem prover Poit\'{i}n.

The initial values of the generalisation variables, given by $v_0'$ and $w_0'$ can be determined using the first invariant requirement as follows:
\begin{center}
$\mathsf{n \geq 0 \wedge x = 0 \wedge y = 1 \Rightarrow x+v_0' = n \wedge y * w_0' = k^\wedge n}$
\end{center}
The assignments $v_0' := n$ and $w_0' := k^\wedge n$ satisfy this assertion.

The inductive values of the generalisation variables, given by $v_{i+1}'$ and $w_{i+1}'$, can be determined using the second invariant requirement as follows:
\begin{center}
$\mathsf{x+v_i' = n \wedge y * w_i' = k^\wedge n \wedge x < n \Rightarrow (x+1)+v_{i+1}' = n \wedge (y*k) * w_{i+1}' = k^\wedge n}$
\end{center}
The assignments $v_{i+1}' := v_i' - 1$ and $w_{i+1}' := w_i'/k$ satisfy this assertion.

The final values of the generalisation variables, given by $v_n'$ and $w_n'$, can be determined using the third invariant requirement as follows:
\begin{center}
$\mathsf{x+v_n = n \wedge y * w_n = k^\wedge n \wedge \neg(x < n) \Rightarrow y = k^\wedge n}$
\end{center}
The assignments $v_n := 0$ and $w_n = 1$ satisfy this assertion.
The discovered invariant is therefore equivalent to the following\footnote{Our technique does not actually convert the discovered invariant into this
simplified form; however the discovered inductive version is sufficient for the purpose of verifying the program.}:
\begin{center}
$\mathsf{x \leq n \wedge y = k^\wedge x}$
\end{center}
}
\end{example}

\subsection{Termination}

In order to prove that our loop invariant algorithm always terminates, we firstly need to show that in any infinite sequence of predicates $P_0, P_1, \ldots$ 
there definitely exists some $i < j$ where $P_i \preceq P_j$. This amounts to proving that the embedding relation $\preceq$ is a {\em well-quasi order}.
\begin{definition}[Well-Quasi Order]
\normalfont{A well-quasi order on set $S$ is a reflexive, transitive relation $\leq$ such that for
any infinite sequence $s_1, s_2, \ldots$ of elements from $S$ there are numbers $i, j$ with
$i < j$ and $s_i \leq s_j$.}
\end{definition}
\begin{lemma}[$\preceq$ is a Well-Quasi Order]
\normalfont{The embedding relation $\preceq$ is a well-quasi order on any sequence of predicates.}
\label{wellquasi}
\end{lemma}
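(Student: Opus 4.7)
The plan is to derive this from a suitably labelled version of Kruskal's tree theorem, together with a pigeonhole argument that lifts the conclusion from $\trianglelefteq$ to the stronger top-level coupling relation $\preceq$.

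First, I would establish as a preliminary that the full homeomorphic embedding $\trianglelefteq$ on expressions is itself a well-quasi order. This is the standard wqo for homeomorphic embedding, going back to Kruskal and reused in the references \cite{SORENSEN94B,HAMILTON07A}. To apply it here, I need to argue that the label set of the underlying trees is finite (or at worst, itself a wqo). The constructors $(\mathsf{Zero},\mathsf{Succ},\mathsf{True},\mathsf{False})$, the pre-defined operators, and the syntactic node-labels (application, abstraction, $\mathsf{CASE}$, $\mathsf{WHERE}$, function call) are drawn from a fixed finite alphabet; bound variables are handled by requiring matching de Bruijn indices, which either sit in a finite range for any given expression or can be treated as labels from the wqo $(\mathbb{N},\leq)$; and free variables come from the finite pool of program variables plus the generalisation variables introduced by the algorithm, which stays finite over any single invariant search.

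Second, I would lift this to $\preceq$. Given an arbitrary infinite sequence $P_0,P_1,\ldots$ of predicates, observe that the set of possible top-level functors in our expression grammar is finite, so by pigeonhole there is an infinite subsequence $P_{i_0},P_{i_1},\ldots$ in which every term has the same top-level functor $\phi$ of some fixed arity $n$. Write $P_{i_k} = \phi(E_{i_k,1},\ldots,E_{i_k,n})$. Since the finite product of wqos is again a wqo, the relation ``$E_{i_k,j} \trianglelefteq E_{i_l,j}$ for every $j$ simultaneously'' is a wqo on the $n$-tuples of components. Hence there exist $k<l$ with $E_{i_k,j} \trianglelefteq E_{i_l,j}$ for all $j\in\{1,\ldots,n\}$, and the coupling rule of $\trianglelefteq$ then yields $P_{i_k} \preceq P_{i_l}$, as required.

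The main obstacle is the first step: pinning down the Kruskal-style wqo for our expression language in the presence of binders and potentially unbounded variable names. The cleanest route is the generalisation of Kruskal's theorem to trees whose node labels come from a wqo, applied with labels drawn from the disjoint union of the finite syntactic alphabet and $(\mathbb{N},\leq)$ for de Bruijn indices. Once this technical hurdle is crossed, the lifting to $\preceq$ via pigeonhole and product-wqo is routine, and the lemma follows.
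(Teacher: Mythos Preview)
Your proposal is correct and follows essentially the same route as the paper: reduce to Kruskal's tree theorem by arguing that the node labels live in a finite (or at least well-quasi-ordered) alphabet. The paper's own proof sketch pins down finiteness more aggressively---it argues that application arities and de Bruijn indices are \emph{bounded}, so each can be encoded by finitely many distinct constructors, giving a strictly finite label set before invoking Kruskal. You instead allow de Bruijn indices to range over $(\mathbb{N},\leq)$ and appeal to the labelled-tree version of Kruskal over a wqo alphabet; either works, though the paper's boundedness argument is slightly sharper and closer to the cited source \cite{KLYUCHNIKOV10A}.

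The one place where you are more careful than the paper is the passage from $\trianglelefteq$ to $\preceq$. The paper simply asserts that Kruskal yields $\preceq$ as a wqo, whereas you supply the missing step: restrict by pigeonhole to an infinite subsequence sharing a top-level functor, then use the product-wqo on the immediate subterms to obtain a coupled pair. That lifting argument is exactly what is needed and is a genuine improvement in rigour over the paper's sketch, not a different proof strategy.
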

\begin{proof}
\normalfont{The proof is similar to that given in \cite{KLYUCHNIKOV10A}. It involves showing that there are a
finite number of functors (function names and constructors) in the language. Applications of different arities are 
replaced with separate constructors; we prove that arities are bounded so there are a finite number of these. 
We also replace case expressions with constructors. Since bound variables are defined using de Bruijn indices,
each of these are replaced with separate constructors; we also prove that de Bruijn indices are bounded.
The overall number of functors is therefore finite, so Kruskal's tree theorem can then be applied to show that 
$\preceq$ is a well-quasi-order.}
\end{proof}
\begin{theorem}[Termination]
\normalfont{The loop invariant algorithm always terminates.}
\end{theorem}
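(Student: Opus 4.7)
The plan is to decompose termination into two arguments matching the two recursive branches of the algorithm: a local argument that bounds the number of consecutive generalisation steps (the middle branch, which leaves $\phi$ unchanged), and a global argument that bounds the number of times a new predicate is added to $\phi$ (the last branch). I would begin by identifying, for any non-terminating trace, the subsequence of calls that take the last branch, producing an infinite sequence of predicates $P_0, P_1, \ldots$ added one-by-one to $\phi$.

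Next I would invoke Lemma \ref{wellquasi}: since $\preceq$ is a well-quasi order on predicates, this infinite sequence must contain some pair $i < j$ with $P_i \preceq P_j$. But at the moment $P_j$ is considered by $f$, the set $\phi$ already contains $P_i$, so the algorithm cannot fall through to the last branch at step $j$: it must take either the first branch (returning) or the middle branch (generalising). This contradicts the assumption that $P_j$ was added to $\phi$, so only finitely many additions to $\phi$ can occur on any trace.

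It then remains to show that the middle branch cannot fire indefinitely without eventually falling through to one of the other branches. For this I would introduce a size measure $|P|$ counting the non-variable symbols of $P$ and argue that, whenever the middle branch fires with some $Q \in \phi$ satisfying $Q \preceq P$ and $P$ not a renaming of any element of $\phi$, the most specific generalisation $P' = P \triangle Q$ strictly decreases this measure, i.e.\ $|P'| < |P|$. The key observation is that $P' \equiv P$ up to renaming would already force $P$ to be a renaming of $Q$ (by properties of the most specific generalisation on coupled terms), which is ruled out by the outer test; hence at least one functor in the common skeleton has been replaced by a fresh variable. Since $|P|$ is a non-negative integer, at most $|P|$ consecutive middle-branch calls are possible before a different branch is taken.

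Combining the two bounds gives termination: any trace consists of finitely many ``$\phi$-growing'' steps, each followed by finitely many generalisation steps, after which either the first branch returns or another $\phi$-growing step occurs; and the total number of $\phi$-growing steps is itself finite. The main obstacle I anticipate is the second argument, namely formalising precisely that a single generalisation step strictly reduces the size measure; this requires a careful inspection of the rewrite rules defining $\triangle$ on coupled terms and of the interaction between the coupling relation $\preceq$ and the most specific generalisation, to exclude the degenerate case in which $P \triangle Q$ is a mere renaming of $P$.
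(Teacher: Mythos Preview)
Your global argument—supposing an infinite run, extracting the subsequence of $\phi$-growing steps, and invoking Lemma~\ref{wellquasi} to obtain $P_i \preceq P_j$ for some $i<j$, contradicting the branch actually taken at step $j$—is exactly the paper's proof. The paper, however, stops there: it simply asserts that non-termination would make $\phi$ infinite and derives the contradiction, with no separate analysis of the middle (generalisation) branch. Your two-part decomposition into a global bound on $\phi$-growing steps plus a local bound on consecutive generalisation steps with $\phi$ fixed is therefore more careful than the paper's own argument, and it explicitly addresses a gap the paper leaves open, since nothing in the paper's reasoning rules out the algorithm cycling indefinitely in the middle branch without ever enlarging $\phi$.

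One caution on the local bound: the key observation that ``$P'\equiv P$ up to renaming would already force $P$ to be a renaming of $Q$'' does not hold for terms that are coupled only at the top level. Take $Q=\psi(z,z)\in\phi$ and $P=\psi(x,y)$ with $x,y$ distinct: here $Q\preceq P$, $P$ is not a renaming of $Q$, yet $P\triangle Q=\psi(V_1,V_2)$ has the same non-variable-symbol count as $P$ (and is, modulo free-variable renaming, just $P$ again). So a size measure counting functor occurrences alone can stall. To make the local bound go through you would need either a finer measure that also tracks the variable-identification pattern (so that each step is a strict generalisation in the subsumption order, which \emph{is} well-founded on finite terms over a finite signature), or a side argument that such degenerate configurations cannot arise from the predicates the algorithm actually produces. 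You already flag this step as the main obstacle, and that instinct is well placed.
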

\begin{proof}
\normalfont{
The proof is by contradiction. If the loop invariant algorithm did not terminate then the set of invariants generated as successive approximations
to the invariant must be infinite. Every new predicate which is added to the set of approximations cannot have 
any of the previously generated predicates on this set embedded within it by the homeomorphic embedding relation $\preceq$, since either 
generalisation would have been performed or a renaming encountered and the algorithm terminated. However, this contradicts the 
fact that $\preceq$ is a well-quasi-order (Lemma \ref{wellquasi}).}
\end{proof} 
\section{Further Examples}

In this section, we consider further examples that have been found to be problematic in techniques (including our own) for the automatic discovery of 
invariants.
\begin{example}
\normalfont{Consider the following example program: \\
\\
$~~~\{\mathsf{n \geq 0}\}$ \\
$~~~\mathsf{x:=n;}$ \\
$~~~\mathsf{y:=1;}$ \\
$~~~\mathsf{z:=k;}$ \\
$~~~\mathsf{WHILE ~ x > 0 ~ DO}$ \\
$~~~~~~\mathsf{BEGIN}$ \\
$~~~~~~~~~ \mathsf{IF}~x\%2=1~\mathsf{THEN}~y:=y*z~\mathsf{ELSE}~\mathsf{SKIP};$ \\
$~~~~~~~~~ \mathsf{x:=x/2;}$ \\
$~~~~~~~~~ \mathsf{z:=z*z}$ \\
$~~~~~~ \mathsf{END}$ \\
$~~~\{\mathsf{y = k^\wedge n}\}$ \\
\\
This program also calculates the exponentiation $\mathsf{k^\wedge n}$. This example is problematic using other approaches (as discussed in Section \ref{related}) 
because of the presence of the conditional inside the loop, which causes an exponential blow-up in the size of the generated predicates; we show how this 
blow-up is avoided using our approach. In the following, we use $S$ to denote the body of the loop in the above program. Firstly, we calculate the logical 
assertion which is true if the loop is exited:
\begin{center}
$\mathsf{\neg(x > 0) \wedge y = k^\wedge n}$
\end{center}
This is simplified by distillation to the following:
\begin{equation}
\mathsf{x \leq 0 \wedge y = k^\wedge n}
\label{eq7}
\end{equation}
Then, we calculate the logical assertion which is true before the final execution of the loop body:
\begin{center}
$\mathsf{WLP(S,x \leq 0 \wedge y = k^\wedge n)}$
\end{center}
This gives the following:
\begin{center}
$\mathsf{(x\%2=1 \Rightarrow x/2 \leq 0 \wedge y*z=k^\wedge n) \wedge (\neg(x\%2=1) \Rightarrow x/2 \leq 0 \wedge y=k^\wedge n)}$
\end{center}
In conjunction with the loop condition ($\mathsf{x > 0}$), the second conjunct is simplified to $\mathsf{True}$ by distillation, but the first conjunct is simplified to the following:
\begin{equation}
\mathsf{x=1 \wedge y*z = k^\wedge n}
\label{eq8}
\end{equation}
This is not an embedding, so the calculation continues.
We next calculate the logical assertion which is true before the penultimate execution of the loop body: 
\begin{center}
$\mathsf{WLP(S,x=1 \wedge y*z = k^\wedge n)}$
\end{center}
This gives the following:
\begin{center}
$\mathsf{(x\%2=1 \Rightarrow x/2 =1 \wedge (y*z)*(z*z)=k^\wedge n) \wedge (\neg(x\%2=1) \Rightarrow x/2 =1 \wedge y*(z*z)=k^\wedge n)}$
\end{center}
In conjunction with the loop condition ($\mathsf{x > 0}$), the first conjunct is simplified to the following by distillation:
\begin{center}
$\mathsf{x=(2*1)+1 \wedge y*(z*(z*z))=k^\wedge n}$
\end{center}
and the second conjunct is simplified to the following:
\begin{center}
$\mathsf{x=2*1 \wedge y*(z*z)=k^\wedge n}$
\end{center}
These are generalised with respect to each other to give the following:
\begin{equation}
\mathsf{x=v \wedge y*(z*w)=k^\wedge n} 
\label{eq9}
\end{equation}
This is not an embedding, so the logical assertion which is true before execution of the loop body is now re-calculated as follows:
\begin{center}
$\mathsf{WLP(S,x=v \wedge y*(z*w) = k^\wedge n)}$
\end{center}
This gives the following:
\begin{center}
$\mathsf{(x\%2=1 \Rightarrow x/2 = v \wedge (y*z)*((z*z)*w)=k^\wedge n)~\wedge}$ \\ 
$\mathsf{(\neg(x\%2=1) \Rightarrow x/2 = v \wedge y*((z*z)*w)=k^\wedge n)}$
\end{center}
In conjunction with the loop condition ($\mathsf{x > 0}$), the first conjunct is simplified to the following by distillation:
\begin{center}
$\mathsf{x=(2*v)+1 \wedge y*(z*(z*(z*w)))=k^\wedge n}$
\end{center}
and the second conjunct is simplified to the following:
\begin{center}
$\mathsf{x=2*v \wedge y*(z*(z*w))=k^\wedge n}$
\end{center}
These are generalised with respect to each other to give the following:
\begin{equation}
\mathsf{x=v' \wedge y*(z*(z*w'))=k^\wedge n} 
\label{eq10}
\end{equation}
We can see that (\ref{eq10}) is an embedding of (\ref{eq9}), so generalisation is performed to give the following:
\begin{equation}
\mathsf{x=v'' \wedge y*(z*w'')=k^\wedge n}
\label{eq11}
\end{equation}
We can now see that (\ref{eq11}) is a renaming of (\ref{eq9}), so (\ref{eq11}) is our putative invariant. 
We now try to find inductive definitions for the generalisation variables $v''$ and $w''$ from the three requirements of loop invariants given in 
Definition \ref{invariant}, which we do using our theorem prover Poit\'{i}n.

The initial values of the generalisation variables, given by $v_0''$ and $w_0''$ can be determined using the first invariant requirement as follows:
\begin{center}
$\mathsf{n \geq 0 \wedge x = n \wedge y = 1 \wedge z = k \Rightarrow x = v_0'' \wedge y * (z * w_0'') = k^\wedge n}$
\end{center}
The assignments $v_0'' := n$ and $w_0'' := k^\wedge (n-1)$ satisfy this assertion.

The inductive values of the generalisation variables, given by $v_{i+1}''$ and $w_{i+1}''$, can be determined using the second invariant requirement as follows:
\begin{center}
$\mathsf{x=v_i'' \wedge y * (z * w_i'') = k^\wedge n \wedge x >0 \Rightarrow}$ \\
$\mathsf{((x\%2=0 \Rightarrow x/2=v_{i+1}'' \wedge (y*z) * ((z * z) * w_{i+1}'') = k^\wedge n)~\wedge}$ \\
$\mathsf{(\neg(x\%2=1) \Rightarrow x/2=v_{i+1}'' \wedge y * ((z * z) * w_{i+1}'') = k^\wedge n))}$
\end{center}
If $x\%2=0$, the assignments $v_{i+1}'' := v_i''/2$ and $w_{i+1}'' := w_i''/(z*z)$ satisfy this assertion. Otherwise, the assignments $v_{i+1}'' := v_i''/2$ and $w_{i+1}'' := w_i''/z$ satisfy this assertion.

The final values of the generalisation variables, given by $v_n''$ and $w_n''$, can be determined using the third invariant requirement as follows:
\begin{center}
$\mathsf{x=v_n'' \wedge y * (z*w_n'') = k^\wedge n \wedge \neg(x > 0) \Rightarrow y = k^\wedge n}$
\end{center}
The assignments $v_n'' := 0$ and $w_n'' = 1/z$ satisfy this assertion.
The discovered invariant is therefore equivalent to the following (again, we do not actually convert the invariant into this simplified form):
\begin{center}
$\mathsf{y*z^\wedge x = k^\wedge n}$
\end{center}
}
\end{example}
\begin{example}
\normalfont{Consider the following example program: \\
\\
$~~~\{\mathsf{n \geq 0}\}$ \\
$~~~\mathsf{x:=0;}$ \\
$~~~\mathsf{y:=1;}$ \\
$~~~\mathsf{WHILE ~ x < n ~ DO}$ \\
$~~~~~~\mathsf{BEGIN}$ \\
$~~~~~~~~~\mathsf{x:=x + 1;}$ \\
$~~~~~~~~~\mathsf{z:=0;}$ \\
$~~~~~~~~~\mathsf{v:=0;}$ \\
$~~~~~~~~~\mathsf{WHILE ~ z < k ~ DO}$ \\
$~~~~~~~~~~~~\mathsf{BEGIN}$ \\
$~~~~~~~~~~~~~~~\mathsf{v:=v + y;}$ \\
$~~~~~~~~~~~~~~~\mathsf{z:=z + 1}$ \\
$~~~~~~~~~~~~ \mathsf{END;}$ \\
$~~~~~~~~~\mathsf{y:=v}$ \\
$~~~~~~ \mathsf{END}$ \\
$~~~\{\mathsf{y = k^\wedge n}\}$ \\
\\
This program also calculates the exponentiation $\mathsf{k^\wedge n}$, but can be problematic using other approaches (as discussed in Section \ref{related})
because it uses a nested loop.
We will assume that the programmer has given the postcondition of the inner loop as $\{\mathsf{v = y * k}\}$. Note that our
technique could also be applied without this information, as the postcondition from the weakest liberal precondition calculation for the 
outer loop body could be used instead. However, the invariant of the inner loop would have to be re-calculated for every weakest liberal 
precondition calculation of the outer loop body. The logical assertion which is true if the inner loop is exited is as follows: 
\begin{center}
$\mathsf{\neg (z < k) \wedge v = y * k}$
\end{center}
This is simplified by distillation to the following:
\begin{equation}
\mathsf{z \geq k \wedge v = y * k}
\label{eq12}
\end{equation}
Then, we calculate the logical assertion which is true before the final execution of the inner loop body: 
\begin{center}
$\mathsf{WLP(BEGIN~v:=v+y; z:=z+1~END,z \geq k \wedge v = y * k)}$
\end{center}
This gives the following:
\begin{center}
$\mathsf{z+1 \geq k \wedge v+y = y * k}$
\end{center}
In conjunction with the loop condition ($\mathsf{z < k}$), this is simplified to the following by distillation:
\begin{equation}
\mathsf{z+1=k \wedge v+y = y * k}
\label{eq13}
\end{equation}
This is not an embedding, so the calculation continues.
The logical assertion which is true before the penultimate execution of the inner loop body is as follows: 
\begin{center}
$\mathsf{WLP(BEGIN~v:=v+y; z:=z+1~END,z+1 = k \wedge v+y = y * k)}$
\end{center}
This gives the following:
\begin{center}
$\mathsf{(z+1)+1 = k \wedge (v+y)+y = y * k}$
\end{center}
In conjunction with the loop condition ($\mathsf{z < k}$), this is simplified to the following by distillation:
\begin{equation}
\mathsf{z+(1+1)=k \wedge v+(y+y) = y * k}
\label{eq14}
\end{equation}
We can see that (\ref{eq14}) is an embedding of (\ref{eq13}), so generalisation is performed to produce the following:
\begin{equation}
\mathsf{z+w=k \wedge v+u = y * k}
\label{eq15}
\end{equation}
This is not an embedding, so the calculation continues.
We next calculate the logical assertion which is true before the penultimate execution of the inner loop body: 
\begin{center}
$\mathsf{WLP(BEGIN~v:=v+y; z:=z+1~END,z+w = k \wedge v+u = y * k)}$
\end{center}
This gives the following:
\begin{center}
$\mathsf{(z+1)+w = k \wedge (v+y)+u = y * k}$
\end{center}
In conjunction with the loop condition ($\mathsf{z < k}$), this is simplified to the following by distillation:
\begin{equation}
\mathsf{z+(1+w)=k \wedge v+(y+u) = y * k}
\label{eq16}
\end{equation}
We can now see that (\ref{eq16}) is an embedding of (\ref{eq15}), so generalisation is performed to produce the following:
\begin{equation}
\mathsf{z+w'=k \wedge v+u' = y * k}
\label{eq17}
\end{equation}
We can now see that (\ref{eq17}) is a renaming of (\ref{eq15}), so (\ref{eq17}) is our putative invariant.
We now try to find inductive definitions for the generalisation variables $w'$ and $u'$ from the three requirements of loop invariants given in 
Definition \ref{invariant}, which we do using our theorem prover Poit\'{i}n.

The initial values of the generalisation variables, given by $w_0'$ and $u_0'$ can be determined using the first invariant requirement as follows:
\begin{center}
$\mathsf{n \geq 0 \wedge x \leq n \wedge z = 0 \wedge v = 0 \Rightarrow z+w_0'=k \wedge v+u_0' = y * k}$
\end{center}
The assignments $w_0' := k$ and $u_0' := y * k$ satisfy this assertion.

The inductive values of the generalisation variables, given by $w_i'$ and $u_i'$, can be determined using the second invariant requirement as follows:
\begin{center}
$\mathsf{z+w_i'=k \wedge v+u_i' = y * k \wedge z < k \Rightarrow (z+1)+w_{i+1}'=k \wedge (v+y)+u_{i+1}' = y * k}$
\end{center}
The assignments $w_{i+1}' := w_i' - 1$ and $u_{i+1}' := u_i'-y$ satisfy this assertion.

The final values of the generalisation variables, given by $w_n'$ and $u_n'$, can be determined using the third invariant requirement as follows:
\begin{center}
$\mathsf{z+w_n'=k \wedge v+u_n' = y * k \wedge \neg(z < k) \Rightarrow v = y * k}$
\end{center}
The assignments $w_n' := 0$ and $u_n' = 0$ satisfy this assertion.
This is equivalent to the following:
\begin{center}
$\mathsf{z \leq k \wedge v = y * z}$
\end{center}
This invariant can then be used to calculate the invariant for the outer loop as shown in Example 1.}
\end{example}
\begin{example}
\normalfont{Consider the following example program: \\
\\
$~~~\{\mathsf{n \geq 0}\}$ \\
$~~~\mathsf{x:=0;}$ \\
$~~~\mathsf{y:=1;}$ \\
$~~~\mathsf{WHILE ~ x < n ~ DO}$ \\
$~~~~~~\mathsf{BEGIN}$ \\
$~~~~~~~~~ \mathsf{x:=x + 1;}$ \\
$~~~~~~~~~ \mathsf{y:=k * y}$ \\
$~~~~~~ \mathsf{END}$ \\
$~~~\{\mathsf{y = k^\wedge n}\}$ \\
\\
This program is very similar to that shown in Figure \ref{example1}, except that the operands of the final multiplication are swapped. 
We now show how this program is problematic using our approach. 
Firstly, we calculate the logical assertion which is true if the loop is exited:
\begin{center}
$\mathsf{\neg(x < n) \wedge y = k^\wedge n}$
\end{center}
This is simplified by distillation to the following:
\begin{equation}
\mathsf{x \geq n \wedge y = k^\wedge n}
\label{eq18}
\end{equation}
Then, we calculate the logical assertion which is true before the final execution of the loop body:
\begin{center}
$\mathsf{WLP(BEGIN~ x:=x+1; y:=k * y~END,x \geq n \wedge y = k^\wedge n)}$
\end{center}
This gives the following:
\begin{center}
$\mathsf{x+1 \geq n \wedge k * y = k^\wedge n}$
\end{center}
In conjunction with the loop condition ($\mathsf{x<n}$), this is simplified to the following by distillation:
\begin{equation}
\mathsf{x+1=n \wedge k * y = k^\wedge n}
\label{eq19}
\end{equation}
This is not an embedding, so the calculation continues.
We next calculate the logical assertion which is true before the penultimate execution of the loop body: 
\begin{center}
$\mathsf{WLP(BEGIN~ x:=x+1; y:=k * y~END,x+1 = n \wedge k * y = k^\wedge n)}$
\end{center}
This gives the following:
\begin{center}
$\mathsf{(x+1)+1=n \wedge k * (k * y) = k^\wedge n}$
\end{center}
In conjunction with the loop condition ($\mathsf{x<n}$), this is simplified to the following by distillation:
\begin{equation}
\mathsf{x+(1+1)=n \wedge k * (k * y) = k^\wedge n}
\label{eq20}
\end{equation}
We can see that (\ref{eq20}) is an embedding of (\ref{eq19}), so generalisation is performed to produce the following:
\begin{equation}
\mathsf{x+v = n \wedge k * w = k^\wedge n}
\label{eq21}
\end{equation}
This is not an embedding, so the logical assertion which is true before execution of the loop body is now re-calculated as follows: 
\begin{center}
$\mathsf{WLP(BEGIN~ x:=x+1; y:=y * k~END,x+v = n \wedge k * w = k^\wedge n)}$
\end{center}
This gives the following:
\begin{center}
$\mathsf{(x+1)+v = n \wedge k * w = k^\wedge n}$
\end{center}
In conjunction with the loop condition ($\mathsf{x<n}$), this is simplified to the following by distillation:
\begin{equation}
\mathsf{x+(1+v) = n \wedge k * w = k^\wedge n}
\label{eq22}
\end{equation}
We can see that (\ref{eq22}) is an embedding of (\ref{eq21}), so generalisation is performed to produce the following:
\begin{equation}
\mathsf{x+v' = n \wedge k * w = k^\wedge n}
\label{eq23}
\end{equation}
We can now see that (\ref{eq23}) is a renaming of (\ref{eq21}), so (\ref{eq23}) is our putative invariant.
Using our theorem prover Poit\'{i}n, we are unable to prove that this invariant satisfies any of the three requirements for the loop invariant given in Definition \ref{invariant}. The problem here is that a variable that is updated in the loop body (in this case $y$) has been removed from the calculated invariant by generalisation. This situation can be avoided to a certain extent by making sure that such variables appear in the left operand of binary operations (as is
the case in Example 1). This is because repeated applications of such operations are always transformed into right-associative form by distillation,
and any mismatches are more likely to occur in the right operand. However, this may not always be possible if the operation is not commutative or if
both operands contain variables that are updated in the loop body.
}
\end{example}

\section{Related Work}
\label{related}
\setcounter{equation}{0}

The main approaches to the automatic generation of loop invariants include abstract interpretation, proof planning, dynamic methods, 
using heuristics and the induction-iteration method. The earliest methods for the automatic generation of loop invariants involved static analysis. 
{\em Abstract interpretation} is a symbolic execution of programs over abstract domains (such as predicate abstraction domains or polyhedral abstraction domains) that over-approximates the semantics of loop iteration. {\em Predicate abstraction} domains \cite{AGERWALA78, GRAF97, SAIDI99, DAS99, 
FLANAGAN02} replace predicates with variables, which is similar to the generalisation we perform in our approach. 
Constraint-based techniques rely on sophisticated decision procedures over non-trivial mathematical domains (such as polynomials 
\cite{SANKARANARAYANAN04} or convex polyhedra \cite{COUSOT78}) to represent concisely the semantics of loops with respect to certain properties. 
Loop invariants in these forms are extremely useful but rarely sufficient to prove full functional correctness of programs.

In \cite{FLANAGAN02}, Flanagan and Qadeer describe the use of predicate abstraction to generate loop invariants. Their
approach differs from our own in that predicates are obtained by working forwards from the precondition through successive 
iterations of the loop, as opposed to backwards from the postcondition in our approach. A {\em strongest postcondition semantics}
is therefore used in \cite{FLANAGAN02} as opposed to our weakest precondition approach. Loop invariants are computed by 
iterative approximation. The first approximation is obtained by abstracting the set of reachable states at loop entry. Each 
successive approximation enlarges the current approximation to include the states reachable by executing the loop body once 
from the states in the current approximation. The iteration terminates in a loop invariant since the abstract domain is finite. 
However, this approach does suffer from the drawback that the approximations can grow exponentially as they are a disjunction of
the approximations for all the reachable states. This exponential growth is avoided in our approach. Also, we argue that working 
forwards from the precondition makes it harder to find the required invariant since (as observed in \cite{FURIA10}), 
the required invariant is often a weakening of the postcondition. 

In \cite{IRELAND97}, a {\em proof planning} approach is used to synthesise loop invariants. This approach makes use of failed attempts
to prove a putative invariant correct. The proof attempts are applied to the verification conditions generated for the putative invariant.
If these proof attempts fail, the failure is analysed using {\em proof critics}. One such critic is the generalisation critic, which performs
generalisation in a similar way to that described in our work, and is used to update the putative invariant to one which is more likely to
be correct. One drawback of this approach is that the original putative invariant has to be guessed, although the postcondition is a
good first guess. Another drawback is knowing which critics to apply when, since multiple critics can discover the invariant, but some
may do so more efficiently than others. Also, it is not clear how this method could be applied to nested loops.

In \cite{ERNST01}, invariants are discovered dynamically. Using this approach, the program is run over a test suite of inputs.
The corresponding outputs are analysed for patterns and relationships among the variables. Candidate invariants are guessed by trying out a 
pre-defined set of user-provided templates (including comparisons between variables, simple inequalities, and simple list comprehensions). 
These candidate invariants are then tested against several program runs; the invariants that are not violated in any of the runs are retained 
as likely invariants. This inference is not sound and only gives an educated guess. However, a prototype tool called Daikon was implemented 
using these techniques, and has worked well in practice and many of the guessed invariants are sound.

In \cite{FURIA10}, Furia and Meyer describe the use of {\em heuristics} to synthesise loop invariants. This work is based on
the observation that the required invariant is often a weakening of the postcondition for the loop and can be obtained 
by mutating this postcondition. The core idea is to generate candidate invariants by mutating postconditions according 
to a few commonly recurring patterns. Although this idea works well in many cases, it is not capable of generating the 
required invariant for the example program in Figure \ref{example1}, as this requires the addition of an extra clause to
the postcondition ($\mathsf{y \leq n}$), which is not one of the described heuristics. 

The previous work which is closest to our own is the {\em induction-iteration} method of Suzuki and Ishihata \cite{SUZUKI77}. This method
works as follows for the program $\mathsf{\{P\}~S_1;WHILE~B~DO~S_2~\{Q\}}$ with precondition $P$ and postcondition $Q$. 
Firstly, the logical assertion which is true if the loop is exited is calculated in a similar way to our technique:
\begin{center}
$P_0 = (\neg B \Rightarrow Q)$
\end{center}
Then, similarly to our technique, the weakest liberal precondition is used to calculate the logical assertion which is true before each execution 
of the loop body (in reverse order):
\begin{center}
\begin{tabular}{l}
$P_{i+1} = (B \Rightarrow WLP(S_2,P_i))$
\end{tabular}
\end{center}
The weakest liberal precondition of the loop is given by $\bigwedge\limits_{i=0}^{\infty}P_i$. In order to calculate this finitely, a number of successive 
approximations are calculated for it until one is found that is a loop invariant, where the $j^{th}$ approximation is given by $I_j = \bigwedge\limits_{i=0}^{j}P_i$. 
It then has to be shown that this approximation is true on entry to the loop and is also a loop invariant:
\begin{equation}
P \Rightarrow WLP(S_1,I_j)
\label{e1}
\end{equation}
\begin{equation}
(I_j \wedge B) \Rightarrow WLP(S_2,I_j)
\label{e2}
\end{equation}
(\ref{e2}) is equivalent to the following:
\begin{equation}
I_j \Rightarrow P_{j+1}
\label{e3}
\end{equation}
This therefore suggests an iterative approach to finding the loop invariant. Successive values for $I_j$ can be computed making use of the
previous values. If (\ref{e3}) is satisfied for the current value of $I_j$, then we are done and $I_j$ is the required invariant.
If (\ref{e1}) is not satisfied for the current value of $I_j$, then we have failed to find a suitable invariant. Otherwise, we carry on
the iteration to $I_{j+1}$.

One problem with this approach is that, unlike our own approach, it is not guaranteed to terminate. This is avoided by limiting the number of iterations. 
It is found that  in practice, for most of the small examples tried, very few iterations are actually required. 
Another problem with this approach is that there can be an exponential blow-up in clauses into increasingly larger conjunctions. 
This is particularly the case for conditionals, and can degrade $I_j$ to such an extent that it never converges to a loop invariant. 
This problem is avoided in \cite{SUZUKI77} by cleverly designing the theorem prover to avoid this potential exponential blow-up in clauses. 
In our approach, this problem is avoided by combining the conjuncts using generalisation. Finally, the seminal work in \cite{SUZUKI77} 
does not show how to deal with nested loops. However, an extension to the technique which does this is described by Xu et al. \cite{XU00}. 
This approach is very similar to the way in which we also deal with nested loops.

\section{Conclusions and Further Work}

In this paper we have described a technique for automatically discovering loop invariants. 
The technique we describe is similar to the induction-iteration method of Suzuki and Ishihata \cite{SUZUKI77}, 
but we overcome the problems associated with that method. One of these problems was the potential non-termination of the induction-iteration method;
our technique is guaranteed to terminate but may not be able to find a suitable invariant. Another problem with the induction-iteration method is the potential exponential blow-up in clauses into
increasingly larger conjunctions. Our technique avoids this through the combination of these conjuncts using generalisation. We have successfully 
demonstrated our technique on example imperative programs that have proven to be problematic using other approaches. We have also characterised
the situations where our technique fails to find an invariant and shown how this can be ameliorated to a certain extent.

There are a number of possible directions for further work. Firstly, we need to extend our techniques to languages with richer features.
For example, we could extend the language to manipulate unbounded data structures such as arrays. For such constructs, the required
loop invariants need to be universally quantified, but this can be handled by our theorem prover Poit\'{i}n, so should not present a problem 
for our technique. Another way in which the language could be extended would be to handle pointers. Separation logic \cite{OHEARN01,REYNOLDS02}
extends Floyd-Hoare logic to be able to handle pointers, so this seems to be an obvious basis for the extension of our technique. It has already been
shown by Ireland \cite{IRELAND06} how his approach to invariant generation can be extended to handle pointers by making use of separation logic.

One other possible direction for further work is extending our technique to deal with the termination of programs. This would involve calculating
the {\em weakest precondition} rather than the weakest liberal precondition as we do here. This would require the generation of a {\em variant}
in addition to an invariant, and the refinement of the invariant to show that the variant is decreased on each iteration of the loop. This appears to be
a lot more challenging than the problem which is tackled here.

\bibliographystyle{eptcs}

\bibliography{mybib}

\end{document}